\documentclass[conference]{IEEEtran}
\IEEEoverridecommandlockouts
\usepackage{cite}
\usepackage{amsmath,amssymb,amsfonts,bbm,bm}
\usepackage{algorithm}
\usepackage{algpseudocode}
\usepackage{graphicx}
\usepackage{float}
\usepackage{subfigure}
\usepackage{textcomp}
\usepackage{xcolor}
\usepackage{float}
\usepackage{paralist}
\usepackage{multirow}
\usepackage{booktabs}
\usepackage[font=small]{caption}
\usepackage{subcaption}
\usepackage{graphicx}
\usepackage{tikz,pgfplots}
\usetikzlibrary{shapes.geometric, arrows, positioning}
\usepackage[nolist]{acronym}

\DeclareMathOperator*{\argmax}{argmax}

\newtheorem{theorem}{Theorem}

\newtheorem{definition}{Definition}

\newcommand{\mB}{\mathbf{B}}
\newcommand{\mGamma}{\mathbf{\Gamma}}

\newcommand{\bv}{\mathbf{v}}
\newcommand{\bu}{\mathbf{u}}

\newcommand{\bh}{\mathbf{h}}
\newcommand{\bs}{\mathbf{s}}

\newcommand{\be}{\mathbf{e}}
\newcommand{\bl}{\mathbf{l}}

\newcommand{\bomega}{\boldsymbol{\omega}}

\newcommand{\setC}{\mathbb{C}}

\newcommand{\setS}{\mathbb{S}}

\newcommand{\btph}{\tilde{\boldsymbol{\varphi}}}

\newcommand{\trp}{\mathsf{T}}
\newcommand{\her}{\mathsf{H}}
\newcommand{\set}[1]{\left\lbrace #1 \right\rbrace}

\newcommand{\brc}[1]{\left( #1 \right)}
\newcommand{\dbc}[1]{\left[ #1 \right]}
\newcommand{\norm}[1]{\left\Vert #1 \right\Vert}
\newcommand{\abs}[1]{\left\vert #1 \right\vert}
\newcommand{\diag}[1]{\mathrm{diag} \left\lbrace #1 \right\rbrace}
\newcommand{\Ex}[1]{\mathbb{E} \left\lbrace #1 \right\rbrace}

\def\BibTeX{{\rm B\kern-.05em{\sc i\kern-.025em b}\kern-.08em
    T\kern-.1667em\lower.7ex\hbox{E}\kern-.125emX}}

\begin{document}
\begin{acronym}
\acro{ap}[AP]{access point}
\acro{pass}[PASS]{\underline{p}inching-\underline{a}ntenna \underline{s}y\underline{s}tem}
\acro{mimo}[MIMO]{multiple-input multiple-output}
\acro{ota-fl}[OTA-FL]{over-the-air federated learning}
\acro{los}[LoS]{line-of-sight}
\acro{ps}[PS]{parameter server}
\acro{noma}[NOMA]{non-orthogonal multiple access}
\acro{irs}[IRS]{reconfigurable intelligent surface}
\acro{snr}[SNR]{signal-to-noise ratio}
\acro{sinr}[SINR]{signal-to-interference and noise ratio}
\acro{bcd}[BCD]{block coordinate descent}
\acro{rzf}[RZF]{regularized zero-forcing}
\acro{pas}[PAS]{pinching antenna system }
\acro{fp}[FP]{fractional programming}
\acro{mrt}[MRT]{maximal-ratio transmission}
\acro{zf}[ZF]{zero-forcing}
\acro{mse}[MSE]{mean squared error}
\acro{fedavg}[FedAvg]{federated averaging}
\acro{fl}[FL]{federated learning}
\acro{awgn}[AWGN]{additive white Gaussian noise}
\end{acronym}

\title{Energy-Efficient Over-the-Air Federated Learning via Pinching Antenna Systems}
\author{
\IEEEauthorblockN{Saba Asaad}
\IEEEauthorblockA{University of Toronto\\
saba.asaad@utoronto.ca}
\and
\IEEEauthorblockN{Ali Bereyhi}
\IEEEauthorblockA{University of Toronto\\
ali.bereyhi@utoronto.ca}
\thanks{This work was supported by the German Research Foundation (DFG).}
}

\maketitle

\begin{abstract}
Pinching antennas systems (PASSs) have recently been proposed as a novel flexible-antenna technology. These systems are implemented by attaching low-cost pinching elements to dielectric waveguides. As the direct link is bypassed through waveguides, PASSs can effectively compensate large-scale effects of the wireless channel. This work explores the potential gains of employing PASSs for over-the-air federated learning (OTA-FL). For a PASS-assisted server, we develop a low-complexity algorithmic approach, which jointly tunes the PASS parameters and schedules the mobile devices for minimal energy consumption in OTA-FL. We study the efficiency of the proposed design and compare it against the conventional OTA-FL setting with MIMO server. Numerical experiments demonstrate that using a single-waveguide PASS at the server within a moderately sized area, the required energy for model aggregation is drastically reduced as compared to the case with fully-digital MIMO server. This introduces PASS as a potential technology for energy-efficient distributed learning in next generations of wireless systems.
\end{abstract}


\section{Introduction}
\label{sec:intro} 

\Ac{fl} is a distributed learning paradigm, where resource-constrained edge devices collaboratively train a model with the help of a server \cite{pmlr-v54-mcmahan17a}. Although promising, implementing \ac{fl} in realistic networks deals with high communication overhead. 
Wireless systems leverage \emph{over-the-air computation} (AirComp) to address this challenge \cite{AirComp1}. AirComp exploits the superposition property of multiple-access channel to compute a desired function in the analog domain.
Building on this concept, \ac{ota-fl} has been proposed, where 
the server uses AirComp to aggregate directly over the air \cite{yang2020federated,bereyhi2023device}. 
Despite its communication efficiency, \ac{ota-fl} remains vulnerable to channel effects and synchronization errors, which impair its convergence \cite{AirCompError}. To overcome these issues, 
recent studies have explored various \textit{channel-enhancement technologies}. Techniques such as massive \ac{mimo}, \ac{irs}, and reconfigurable antennas use their physical degrees of freedom to control signal propagation and improve AirComp-assisted model aggregation \cite{mMIMO-FL,IRS-FL1,IRS-FL2,zhao2025movable}. The study in \cite{mMIMO-FL} invokes favorable propagation in massive \ac{mimo} systems to improve the aggregation quality. In \cite{IRS-FL2} and \cite{ni22FL}, passive beamforming via \acp{irs} is exploited to suppress model aggregation error. 
Movable-antenna technology \cite{zhu2023movable} is used in \cite{zhao2025movable} to tune the communication channel toward the server for successive superposition of parameters. While these techniques are able to utilize their degrees of freedom to suppress the destructive superposition in uplink channel, they suffer from the inherent drawback of \textit{high path-loss}. For instance, \acp{irs} suffer from double-attenuation due to passive reflection, and massive \ac{mimo} and movable-antennas can only combat small-scale fading.  

Recently, \textit{\acp{pass}} have been proposed as a novel technology for wireless transmission \cite{PASS1,PASS2}. This proposal has been motivated by the prototype demonstrated at DOCOMO in 2022 \cite{DOCOMO-PASS}. A \ac{pass} employs a dielectric waveguide as its backbone medium and radiates the signal towards mobile users through pinching elements that are coupled with the waveguide and can be dynamically activated or deactivated at any position \cite{ouyang2025array}. Unlike conventional flexible-antenna systems, a \ac{pass} employs an arbitrarily long waveguide that allows antennas to be positioned close to users, and thereby it establishes strong \ac{los} links \cite{MIMO-PASS,PASS-DL}. This enables a \ac{pass}-assisted transmitter to significantly reduce the impact of path-loss in the system. From a hardware viewpoint, \ac{pass} is also inexpensive and easy to implement, as it only requires a single radio-frequency chain per waveguide and is realized by coupling dielectric elements to the waveguide \cite{PASS2}. It can be regarded as a specific realization of reconfigurable antenna concept, offering greater flexibility and scalability than existing architectures. 
\par Motivated by promising gains of \ac{pass}, this work develops a novel \ac{ota-fl} design, which deploys a \ac{pass} at the server to enhance over-the-air model aggregation. Noting that \ac{pass} primarily combats with large-scale fading impacts, our design focuses on energy efficiency of \ac{ota-fl}. 
Our main contributions are three-fold: (i) invoking the notion of \textit{computation rate}, we develop a framework for energy-efficient \ac{ota-fl} with a \ac{pass}-assisted server. Our framework jointly tunes the \ac{pass} and schedules the devices, such that the energy required for model training at a desired aggregation accuracy is minimized. (ii) The energy-efficient \ac{ota-fl} design reduces to an NP-hard optimization. We develop a two-tier algorithm that approximates the solution of this problem at low-complexity. Our algorithm invokes alternating optimization to iterate among marginal problems whose solutions are determined efficiently. (iii) We validate our design through numerical experiments, where we compare the \ac{pass}-assisted \ac{ota-fl} design against the baseline \ac{ota-fl} with \ac{mimo} server. Our results demonstrate that, due to efficient compensation of large-scale effects, \ac{pass} can drastically reduce the energy required for \ac{ota-fl}, interestingly at a lower hardware cost. This places \acp{pass} as strong candidates for distributed learning in the next generations of wireless systems.

\paragraph*{Notation}
Throughout the paper, we show the scalars, vectors, and matrices with non-bold, bold lowercase, and bold uppercase letters, respectively. Expectation is denoted by $\Ex{\cdot}$, and the integer set $\set{1, \ldots,n}$ is shortened as $[n]$. 

\section{System Model}
Consider a distributed learning system, in which $K$ single-antenna devices communicate over wireless links with a single \ac{ps} that is equipped with a \ac{pass} transceiver. 
The location of device $k$ is $\mathbf{u}_k = [x_k, y_k, 0]$, where we assume that all devices are in the $xy$-plane. The \ac{pass} transceiver at the \ac{ps} consists of a waveguide of length $L$ with $N$ pinching elements, which can freely move along the waveguide. The waveguide is aligned along the $x$-axis at $y=0$ and altitude $z=a$. The coordinate of element $n$ on the waveguide is thus given by $\mathbf{v}_{n} = [\ell_{n}, 0, a]$, where $ 0 \leq \ell_{n} \leq L$ denotes the position of element $n$ on the waveguide. The signals are fed into and received from the waveguide at $x = 0$. 



\subsection{Distributed Learning Setting}
The edge device $k$ has a local dataset $\mathcal{D}_k = \{ (\mathbf{u}_{k,i}, v_{k,i}) : 1 \leq i \leq I_k \}$,
where $I_k = \abs{\mathcal{D}_k}$ denotes the dataset size, $\mathbf{u}_{k,i}$ is a data sample with $v_{k,i}$ being its corresponding label. The edge devices aim to collaboratively train a global model at the \ac{ps} without sharing their local data. To this end, they use the \ac{fedavg} algorithm: starting from a common model with parameter $\boldsymbol{\omega}$, device $k$ minimizes 
\begin{equation}
F_k(\boldsymbol{\omega}) = \frac{1}{I_k} \sum_{ (\mathbf{u}_k, v_k) \in \mathcal{D}_k } \ell \brc{\boldsymbol{\omega} \vert \mathbf{u}_{k}, v_{k}},
\end{equation}
using a gradient-based optimizer. Let $\boldsymbol{\omega}_k$ denote the locally-updated model. Device $k$ computes sufficient statistics 
$\bs_k$ from this local model and shares it with the \ac{ps}. Upon receiving all local updates, the \ac{ps} computes a weighted average as $\bs = \sum_{k=1}^K \varphi_k \bs_k$ for some predefined coefficients $\varphi_1, \ldots, \varphi_K$, and shares it back with the edge devices. The edge devices update their common model $\bomega$ using $\bs$ and repeat this procedure. We call each iteration a \textit{communication round}   
and use AirComp to aggregate over the air. 

\subsection{PASS-assisted AirComp Aggregation}
We now consider aggregation in a single round, in which the devices send their local parameters, i.e. $\mathbf{s}_{k}$ for $k\in \dbc{K}$, to the \ac{ps}. Using the \ac{ota-fl} framework, the devices transmit simultaneously their local parameters over shared time and frequency resources. Noting that the parameters are superimposed over the multiple access channel, we can realize the model aggregation directly over the air via proper analog pre- and post-processing. Without loss for generality, let us focus on transmission in a single time-frequency interval, in which the devices send a single model parameter $s_{k} \dbc{j}$. For sake of compactness, we drop the index $j$ in the sequel and represent a particular model parameter by $s_{k}$.

Prior to uplink transmission, the \ac{ps} schedules a subset $\setS \subseteq \dbc{K}$ of devices for model sharing. Device $k \in \setS$ scales its parameter $s_k$ as $x_k = b_k s_k$ for some $b_k \in \setC$ and transmits $x_k$ over the uplink channel. The average transmit power of each device is limited by $P$, i.e. $\abs{ b_k }^2\leq P$, assuming that $s_k$ is normalized prior to transmission.\footnote{This is typically the case in practice; see for instance \cite{yang2020federated}.} The signal received at the \ac{ps} is hence given by
\begin{align}\label{eqR}
y =\sum_{k} h_k\brc{\bl} \gamma_k x_k  + z,
\end{align}
where $\gamma_k \in \set{0,1}$ represents the scheduling status of device $k$, i.e. $\gamma_k = 1$ if $k\in \setS$ and $\gamma_k = 0$ otherwise, and $z$ is \ac{awgn} with mean zero and variance $\sigma^2$, i.e. $z \sim \mathcal{CN}(0, \sigma^2)$. The coefficient $h_{k} \brc{\bl}$ further denotes the channel between the \ac{pass} and device $k$, which depends on the location vector $\bl = \dbc{\ell_1, \ldots,\ell_N}$, representing the locations of elements on the waveguide. We will give the explicit model for $h_k\brc{\bl}$ later in this section. 

In an ideal scenario with noiseless links, the \ac{ps} aggregates the local parameters into a global parameter $\theta$ as
\begin{align}\label{eq:theta}
\theta=
\sum_{k} \varphi_k \gamma_k s_k.
\end{align}
Taking $\theta$ as the ground truth for aggregation, in noisy setting, the \ac{ps} estimates $\theta$ from its received signal $y$ directly in the analog domain as $\hat{\theta}=\rho y$ using a scaling factor $\rho\in\setC$. 


\subsection{PASS Channel Model}
The signal received from each device is the superposition of copies received via pinching elements and superimposed through the waveguide according to the relative distances of the elements. Following analysis in \cite{MIMO-PASS}, the uplink channel between the \ac{ps} and device $k$ is given by
\begin{align}\label{eq:g}
   {h}_{k} \brc{\bl} = \xi \alpha_{k}\sum_{n} 
    \frac{  
    \exp\set{-j \psi {E_{k}\brc{\ell_{n}}  } }
    }{ D_{k}\brc{\ell_{n}} },
\end{align}
where $\psi$ denotes the wavenumber computed from the wavelength $\lambda$ as $\psi = 2\pi/\lambda$, $\xi = {\lambda}/{4 \pi}$ is the coefficient proportional to the effective surface of the pinching elements, $\alpha_k$ represents the shadowing experienced by device $k$, $D_{k}\brc{\ell}$ is the distance between location $\ell$ on waveguide and device $k$, i.e. 
$D_{k}^2\brc{\ell} 
=(\ell -x_k)^2 + y_k^2 + a^2$, 
and $E_{k}\brc{\ell_{n}} = D_{k}\brc{\ell_{n}} + i_{\mathrm{ref}} \ell_{n}$ denotes the effective phase-shift at the signal received by element $\ell_n$ with $i_{\mathrm{ref}}$ being the reflective coefficient of the waveguide.

In the sequel, we can compactly write the received signal as
    $y = \bh \brc{\bl}^\trp \mGamma\mB\bs + z$,
where $\bh\brc{\bl} = \dbc{h_{1} \brc{\bl}, \cdots, h_{K} \brc{\bl}  }^\trp$ is the channel vector, $\mB=\diag{b_1, \cdots, b_K}$ contains the power scaling coefficients, $\mGamma=\diag{\gamma_1, \cdots, \gamma_K}$ is the scheduling matrix, and $\bs=[s_1, \cdots, s_K]^\trp$. 

\section{Energy Model for OTA-FL}
\ac{ota-fl} sketches a trade-off between aggregation quality and energy: by spending more energy 
the aggregation error at the \ac{ps} is reduced. Our ultimate goal is to develop an aggregation scheme that achieves a desired aggregation quality at minimum energy cost. There are various approaches to relate energy consumption to aggregation quality. In the sequel, we invoke a recent approach based on \textit{computation rate} \cite{ni22FL}. 

\subsection{Computation Rate}
Consider $\theta$ in \eqref{eq:theta} to be the target aggregation, which is estimated at the \ac{ps} by $\hat{\theta} = \rho y$. The \ac{mse} between the estimate and the target is given by
\begin{subequations}\label{eq:AggregationError}
\begin{align}
\varepsilon \brc{\bl,\mGamma,\mB,\rho}
&= \Ex{\abs{\hat{\theta}-\theta}^{2} }
\overset{\star}{=} \Ex{\abs{\rho y - \boldsymbol{\varphi}^\trp \mGamma\bs }^{2}}, \\
&\overset{\dagger}{=} \norm{\rho\bh(\bl)^\trp\mGamma\mB-\boldsymbol{\varphi}^{\trp}\mGamma }^{2} 
+\sigma^{2} \rho^2, 
\label{eq:agg2}
\end{align}
\end{subequations}
where we define $\boldsymbol{\varphi} = \dbc{\phi_1, \ldots,\phi_K}^\trp$ in the identity $\star$, and $\dagger$ uses $\Ex{\bs\bs^{\her}}=\mathbf{I}_{K}$ and independence of $z$. The computation rate is then defined in terms of the \ac{mse} as follows:

\begin{definition}[Computation rate \cite{ni22FL}]
    Let $\hat{\theta}$ be the over-the-air computation of $\theta$ with \ac{mse} $\varepsilon$ over a sub-channel with bandwidth $B$. The computation rate is then defined as
    \begin{align}
    \mathcal{R}_{\rm comp}= B \log_{2} \dbc{
    \mathbb{E} \{\vert{\hat{\theta}}\vert^{2}\}/\varepsilon
    }.
    \end{align}
\end{definition}

From the information-theoretic perspective, the computation rate characterizes the minimum rate required for transmission of $\hat{\theta}$ with distortion $\varepsilon$, assuming that $\hat{\theta}$ is distributed Gaussian with mean zero. An alternative interpretation is given by re-writing the computation-rate as
\begin{align}
\mathcal{R}_{\rm comp}= B \log_{2} \dbc{1+ 
\frac{1}{\varepsilon} \brc{
\mathbb{E} \{\vert{\hat{\theta}}\vert^{2}\}
- \varepsilon}
}.
\end{align}
In this form, $\mathcal{R}_{\rm comp}$ is the achievable rate over the channel used for analog computation, while treating the \ac{mse} as an uncorrelated Gaussian noise process. The key property of this metric is that it enables us to relate the transmission time to the aggregation error, as we will explain in the sequel.


\subsection{Energy Model}
In our setting, we have
\begin{align}
\mathbb{E} \{\vert{\hat{\theta}}\vert^{2}\} 
&= \Ex{\abs{\rho y}^{2}} 
=\abs{\rho}^2 \norm{\bh(\bl)\mGamma\mB}^{2}
+\abs{\rho}^2 \sigma^{2}.
\end{align}
We determine the computation \ac{snr} as
\begin{subequations}
\begin{align}
\mathrm{SNR}\brc{\bl,\mGamma,\mB,\rho}
&= \frac{
\mathbb{E} \{\vert{\hat{\theta}}\vert^{2}\}
}{
\varepsilon\brc{\bl,\mGamma,\mB,\rho}
}\\
&= \frac{ \abs{\rho}^2
\norm{\bh(\bl)^\trp\mGamma\mB}^{2} + \abs{\rho}^2 \sigma^{2}
}{
\norm{\rho\bh(\bl)^\trp\mGamma\mB - \boldsymbol{\varphi}^\trp \boldsymbol{\Gamma}}^{2}
+\abs{\rho}^2 \sigma^{2}
}.
\end{align}
\end{subequations}
Consequently, the computation rate is given by
\begin{align}
\mathcal{R}_{\mathrm{comp}}\brc{\bl,\mGamma,\mB,\rho}=B\log_{2}\brc{\mathrm{SNR}\brc{\bl,\mGamma,\mB,\rho}},
\end{align}
assuming that the deployed sub-channel has bandwidth $B$.


The notion of computation rate relates the aggregation error to the transmission duration: let $R_0$ denote the desired resolution for each local model in bits. This can be considered as the length of the quantized models in bits, whose quantization error is equivalent to the computation error achieved over the air. By setting the uplink transmission duration to
\begin{align} 
t \brc{\bl,\mGamma,\mB,\rho} = {R_0}/{
\mathcal{R}_{\mathrm{comp}}\brc{\bl,\mGamma,\mB,\rho}
},
\end{align}
it is guaranteed that the computation error is bounded by the quantization error at the desired resolution $R_0$. In this case, the transmission energy spent at device $k$ to share a model parameter at resolution $R_0$ is given by
\begin{align}
    \mathcal{E}_k \brc{\bl,\mGamma,\mB,\rho} = \gamma_k \abs{b_k}^2 t\brc{\bl,\mGamma,\mB,\rho}.
\end{align}

\section{Energy Efficient OTA-FL Scheme}
We design an energy-efficient scheme, which addresses the following requirements: it ($i$) minimizes the total transmission energy, and ($ii$) guarantees a threshold learning performance. 
This is formulated through the following optimization.
\begin{align}
    \min_{\bl,\mGamma,\mB,\rho} \quad 
    & \sum_{k=1}^K \mathcal{E}_k\brc{\bl,\mGamma,\mB,\rho}
    \tag{$\mathcal{P}$}
    \label{eq:P}
    \\
    \text{s.t.} \quad 
    & |b_k|^2 \le P,\quad \forall \; k, 
    \tag{$\mathrm{C}_1$}
    \label{eq:C1}
    \\
    & \gamma_k \in \{0,1\}, \quad \;\,\forall \; k, 
    \tag{$\mathrm{C}_2$}
    \label{eq:C2}
    \\
    & \sum_{k=1}^K \gamma_k \ge K_{\min}, 
    \tag{$\mathrm{C}_3$}
    \label{eq:C3}
    \\
    & 0 \le \ell_{n} \le L \; \text{ and } \; 
    \ell_{n+1} -\ell_{n} \ge \Delta , \quad \forall \; n.
    \tag{$\mathrm{C}_4$}
    \label{eq:C4}
\end{align}
In \eqref{eq:P}, the objective determines the total transmission energy, whose minimization addresses requirement ($i$). The constraint \eqref{eq:C3} further addresses requirement ($ii$) by guaranteeing that at least $K_{\min}$ devices participate in the \ac{fedavg} algorithm. 
The remaining constraints impose system restrictions, namely \eqref{eq:C1} limiting transmit power, and \eqref{eq:C4} restricting the \ac{pass} element locations to their physically feasible region. 

Noting that the transmission time is the same for all devices, we can simplify \eqref{eq:P} to
\begin{align}
    \min_{\bl,\mGamma,\mB,\rho}  \quad 
    & \frac{
    \sum_k \abs{b_k}^2 \gamma_k
    }{
    \log_2\brc{
    \mathrm{SNR}\brc{\bl,\mGamma,\mB,\rho}
    }
    }
    \tag{$\mathcal{P}'$}
    \label{eq:Pp}
    \quad
    \text{s.t.} \quad (\mathrm{C}_1)-(\mathrm{C}_4) .
\end{align}
This problem is a mixed program: the scheduling variables create a combinatorial search, and the other variables show up nonlinearly in the objective rendering non-convexity. 

\paragraph*{Proposed Solution in Nutshell}
We tackle \eqref{eq:Pp} using alternating optimization: the problem is broken into three \textit{marginal problems,} where each problem optimizes the objective with respect to a subset of variables, treating the others as fixed. For each marginal problem, we develop a low-complexity algorithm that solves the problem efficiently. The joint solution is then estimated within an outer tier that alternates between the marginal solutions. %
In the sequel, we discuss each marginal problem separately. 

\subsection{PASS Transceiver Design}
The first marginal problem is specified by fixing $\mGamma$ and $\mB$ in \eqref{eq:Pp}, and optimizing the objective with respect to the positions of the elements, i.e. $\bl$, and the receiver scaling factor $\rho$. The objective in this case depends on the optimization variables through the denominator. We can hence re-write the problem as a maximization of the denominator. Noting that only \eqref{eq:C4} depends on $\bl$ and the fact that $\log$ is an increasing function, the marginal problem reduces to
\begin{align}
\max_{\bl,\rho}\quad 
\mathrm{SNR}\brc{\bl,\mGamma,\mB,\rho}
    \tag{$\mathcal{M}_1$} \label{eq:M1}
\quad
\text{s.t.}\quad 
(\mathrm{C}_4).
\end{align}
This problem is non-convex due to the constraint \eqref{eq:C4}. A basic approach is to solve it by a step-wise scheme. We however note that the constraint \eqref{eq:C4} only restricts the subspace of $\setC^K$ that is represented by $\bh\brc{\bl}$. We can use this fact to solve the problem more efficiently via a projection-based approach, as illustrated in the sequel.

For a given set of devices $\setS$, let 
$\tilde{\bh}(\bl) = \dbc{h_k\brc{\bl}: k \in \setS}$,
$\tilde{\mB} = \diag{b_k: k \in \setS}$, and
$\tilde{\boldsymbol{\varphi}} = \dbc{\varphi_k: k \in \setS}$.
We can then write the objective function of \eqref{eq:M1} as
\begin{align}\label{eq:F_v_rho}
F\brc{\bv,\rho} = \frac{
        \norm{\rho \bv}^{2} + \sigma^{2} \abs{\rho}^2
        }{
        \norm{\rho \bv - \btph}^{2}
        +\sigma^{2} \abs{\rho}^2
        },
\end{align}
where $\bv = \tilde\mB \tilde\bh\brc{\bl}$. Noting that the length of the waveguide is significantly larger than the wavelength, one can argue that the phase of the entries of $\bv$ can be arbitrarily tuned. Nevertheless, the norm of $\bv$ is restricted by the transmit power and the path-loss. In fact, we can write
\begin{align}
    \norm{\bv}^2 = \norm{\tilde\mB\bh\brc{\bl}}^2 &= \sum_{k \in \setS} \abs{h_k\brc{\bl}}^2\abs{b_k}^2\leq A \Vert\tilde\mB\Vert^2,
\end{align}
where $A$ is the upper-bound on the norm of $\tilde{\bh}(\bl)$ given by
\begin{align} \label{eq:BoundA}
    A = \sum_{k \in \setS} \frac{\abs{\xi\alpha_k}^2 N}{D_k^2\brc{x_k}},
\end{align}
with $ x_k$ being the $x$-coordinate of $\bu_k$. 

The above observation suggests that the solution to \eqref{eq:M1} can be accurately approximated by $\rho^\star$ and $\bv^\star$, which are the solutions to the following optimization  
\begin{align}\label{eq:M1-p}
\max_{\bv,\rho}\quad 
F\brc{\bv,\rho}
\quad
\text{s.t.}\quad 
\norm{\bv} \leq Q,
\end{align}
with $F$ given in \eqref{eq:F_v_rho} and $Q = \sqrt{A}\Vert \tilde\mB \Vert$. Knowing the solution $(\rho^\star, \bv^\star)$ to \eqref{eq:M1-p}, one can set $\rho = \rho^\star$ and $\bl $ to what minimizes $\Vert{\tilde{\mB} \bh\brc{\bl^\star} - \bv^\star}\Vert$. More precisely, we can approximate the solution of \eqref{eq:M1} by finding the solution of the relaxed problem in \eqref{eq:M1-p} and then projecting it into the space represented by $\bh\brc{\bl}$. The following theorem gives the solution to \eqref{eq:M1-p}.

\begin{theorem}
    The solution to \eqref{eq:M1-p} is given by $\rho^\star = {\norm{\btph}}/{Q}$ and $\bv^\star = \brc{{Q}/{\norm{\btph}}}\btph$.
\end{theorem}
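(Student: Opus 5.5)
The plan is to remove the scaling $\rho$ by factoring it out of both numerator and denominator of $F$. After that, the objective is bounded by a quantity that depends only on $\norm{\bv}$, and this bound is attained exactly by the claimed pair. Throughout I read the noise term as $\sigma^2\abs{\rho}^2$, as in \eqref{eq:F_v_rho}.

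\textbf{Step 1 (reparametrize).} For $\rho\neq 0$, set $\eta = 1/\rho\in\setC$. Dividing numerator and denominator of \eqref{eq:F_v_rho} by $\abs{\rho}^2$ gives
\begin{align*}
F\brc{\bv,\rho} = \frac{\norm{\bv}^2+\sigma^2}{\norm{\bv-\eta\btph}^2+\sigma^2}.
\end{align*}
The case $\rho=0$ is degenerate: the numerator vanishes and $F=0$, so it cannot be optimal.

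\textbf{Step 2 (upper bound).} The denominator is at least $\sigma^2$, and the feasibility constraint gives $\norm{\bv}\le Q$. Hence, for every feasible pair,
\begin{align*}
F\brc{\bv,\rho} \le \frac{\norm{\bv}^2+\sigma^2}{\sigma^2} \le 1+\frac{Q^2}{\sigma^2}.
\end{align*}

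\textbf{Step 3 (attainability and equality conditions).} Equality holds in the first inequality iff $\bv=\eta\btph$. It holds in the second iff $\norm{\bv}=Q$. Together these force $\abs{\eta}=Q/\norm{\btph}$.

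Choosing $\eta$ real and positive gives $\rho^\star=\norm{\btph}/Q$ and $\bv^\star=\brc{Q/\norm{\btph}}\btph$. As a direct check, $\rho^\star\bv^\star=\btph$, so the error term vanishes and $F=1+Q^2/\sigma^2$, which matches the bound. Hence this pair is optimal.

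More generally, the maximizers are exactly $\bv=e^{j\vartheta}\brc{Q/\norm{\btph}}\btph$ with $\rho=e^{-j\vartheta}\norm{\btph}/Q$. This common phase does not affect $F$, so the theorem's solution is unique up to this phase ambiguity.

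\textbf{Main obstacle.} There is no real difficulty; the only care needed is the reparametrization in Step 1. One must keep $\rho$ complex, so that its phase is absorbed into $\eta$, and handle $\rho=0$ separately. Without the substitution, direct joint maximization over $(\bv,\rho)$ is awkward because the objective is non-concave. With it, the problem reduces to the two elementary inequalities of Step 2.
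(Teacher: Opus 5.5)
Your proof is correct, and it takes a genuinely different route from the paper's.

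\textbf{The paper's route.} It reduces the problem one variable at a time. It restricts $\rho$ to the real axis, writes $\bv=\zeta\be$, and aligns $\be$ with $\btph$. It then substitutes $a=\rho\zeta$ and solves the scalar problem in $a$ for each fixed $\rho$; the optimum is either the stationary point or the boundary $a=Q\rho$, with the two regimes separated by a threshold $\rho_{\mathrm{th}}$. Finally, it compares the best values of the resulting one-dimensional function of $\rho$ on either side of $\rho_{\mathrm{th}}$.

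\textbf{Your route.} You divide out $\abs{\rho}^2$, so that $\rho$ survives only through $\eta=1/\rho$ in the error term. Two elementary inequalities then give the global bound $F\le 1+Q^2/\sigma^2$. This bound is met exactly when the aggregation error is zero-forced ($\rho\bv=\btph$) and the budget is saturated ($\norm{\bv}=Q$).

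\textbf{What each approach buys.} Yours gives a certificate of global optimality, the optimal value in closed form, and the full set of maximizers (unique up to a common phase). It also explains transparently why the solution looks as it does, and it needs no case analysis and no shape properties of $F(\rho)$. The paper's route yields extra information, namely the best $\norm{\bv}$ for every fixed $\rho$, which would matter if $\rho$ were itself constrained. It is, however, longer and more delicate:
\begin{itemize}
\item As written, the two regimes of its piecewise $a^\star(\rho)$ are interchanged. The constraint $a\le Q\rho$ binds for $\rho\le\rho_{\mathrm{th}}$, not for $\rho\ge\rho_{\mathrm{th}}$.
\item On the first interval, $F(\rho)$ is only unimodal, not concave, since it behaves like $\rho^2$ near $\rho=0$.
\end{itemize}
The final conclusion survives both points, but your bound avoids these issues altogether.

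The assumptions you use implicitly, $\sigma^2>0$ and $\btph\neq\mathbf{0}$, are exactly those needed for the theorem's formulas to be meaningful.
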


\begin{IEEEproof}
    We first note that the phase of $\rho$ only appears in the denominator adding a shift to the phase of $\bv$. As the phase of $\bv$ is not restricted, we conclude that $\rho$ can be limited to the real axis. Next, we decompose $\bv$ in terms of its amplitude $0 \leq \zeta \leq Q$ and directional vector $\be \in \setC^K$, which satisfies $\norm{\be} = 1$. In this case, the optimization is rewritten as
\begin{align}
\max_{\be,\zeta,\rho}\quad 
&\frac{
        \rho^2 \zeta^2 + \sigma^{2} \rho^2
        }{
        \norm{\rho \zeta \be - \btph}^{2}
        +\sigma^{2} \rho^2
        }
\quad
\text{s.t.}\quad 
0\leq \zeta \leq Q.
\end{align}
The objective is maximized in $\be$ by setting $\be$ to be co-linear with $\btph$, i.e. $\be^\star = \btph / \norm{\btph}$. Defining $a = \rho\zeta$, we can solve 
\begin{align}\label{eq:FQ}
\max_{a,\rho}\quad 
&\frac{
        a^2 + \sigma^{2} \rho^2
        }{
        \brc{a - \norm{\btph}}^{2}
        +\sigma^{2} \rho^2
        }
\quad
\text{s.t.}\quad 
0\leq a \leq Q \rho,
\end{align}
which is quadratic in $a$. The solution is thus given by either the unconstrained maximizer or the upper bound, i.e. $a = Q\rho$. We can hence write the solution as
\begin{align}\label{eq:AA}
    a^\star\brc{\rho} = 
    \begin{cases}
    Q\rho &\rho \geq \rho_{\mathrm{th}}\\
    \tfrac{\norm{\btph}}{2} \brc{1 + \sqrt{1+4\tfrac{\sigma^2}{\norm{\btph}^2} \rho^2}} &\rho < \rho_{\mathrm{th}}
    \end{cases},
\end{align}
where $\rho_{\mathrm{th}} = + \infty$ when $Q\leq \sigma$ and $\rho_{\mathrm{th}} = {Q \norm{\btph}}/\brc{Q^2 - \sigma^2}$ when $Q > \sigma$. 
Replacing $a^\star\brc{\rho}$ in the objective of \eqref{eq:FQ}, we can rewrite the objective in terms of $\rho$ as
\begin{align}
F\brc{\rho}=\frac{
        a^\star\brc{\rho}^2 + \sigma^{2} \rho^2
        }{
        \brc{a^\star\brc{\rho} - \norm{\btph}}^{2}
        +\sigma^{2} \rho^2
        }.
\end{align}
The optimizer is then given by
\begin{align}
    \max_{\rho \geq 0} F\brc{\rho} = \max \set{ 
    \max_{0 \leq \rho < \rho_{\mathrm{th}}} F\brc{\rho}, 
    \max_{\rho \geq \rho_{\mathrm{th}}} F\brc{\rho}
    }.
\end{align}
For each of the inner optimizations, we can use \eqref{eq:AA} to write
\begin{subequations}
\begin{align}
    &\argmax_{0 \leq \rho < \rho_{\mathrm{th}}} F\brc{\rho} =  {\norm{\btph}}/{Q}\\
    &\argmax_{\rho \geq \rho_{\mathrm{th}}} F\brc{\rho} = \rho_{\mathrm{th}}.
\end{align}
\end{subequations}
We next note that $\rho_{\mathrm{th}} > {\norm{\btph}}/{Q}$ and that $F\brc{\rho}$ is concave for ${0 \leq \rho < \rho_{\mathrm{th}}}$ and decaying for ${\rho \geq \rho_{\mathrm{th}}}$. This indicates that the maximizer in ${0 \leq \rho < \rho_{\mathrm{th}}}$ is larger than the one in ${\rho \geq \rho_{\mathrm{th}}}$, concluding that $\rho^\star = {\norm{\btph}}/{Q}$ and $a^\star = {\norm{\btph}}$. 
\end{IEEEproof}

From \eqref{eq:M1-p}, the solution to \eqref{eq:M1} is approximated via the Gauss-Seidel scheme: starting with an initial $\bl$, we update each $\ell_n$ separately to minimize $\Vert{\tilde{\mB} \bh\brc{\bl^\star} - \bv^\star}\Vert$ with others treated as fixed. The algorithm is summarized in Algorithm~\ref{alg:1}. Here, 
    $f_n\brc{\ell_n}=\Vert{\mGamma{\mB} \bh\brc{\bl\brc{\ell_n}} - \bv}\Vert$, 
where $\bl\brc{\ell_n}$ denotes $\bl$ when all entries except $\ell_n$ being fixed, and \texttt{Grid}$\brc{f\vert A:B}$ denotes grid search in $\dbc{A,B}$ for minimizer of $f$. 
\begin{algorithm}[t!]
\caption{PASS Tuning}\label{alg:1}
\begin{algorithmic}[1]
\State Set $Q = \sqrt{A} \Vert\mB \mGamma\Vert$, 
$\rho = \norm{\mGamma\boldsymbol{\varphi}} / Q$, and $\bv = {\mGamma\boldsymbol{\varphi}} /\rho$
\State Set $\ell_0 = 0$, $\ell_{N+1} = L$, $t=0$, and $\varepsilon_0$ to a small constant
\While{$\varepsilon > \varepsilon_0$ and $t < T$}
\For{$n=1:N$}
\State Update $\ell_n \gets \texttt{Grid}\brc{f_n\vert \ell_{n-1} + \Delta: \ell_{n+1} - \Delta}$
\EndFor
\State Update $t \gets t+1$ and $\varepsilon \gets \Vert{\mGamma{\mB} \bh\brc{\bl} - \bv}\Vert$
\EndWhile
\end{algorithmic}
\end{algorithm}

\subsection{Transmit Power Scaling}
The second marginal problem solves \eqref{eq:P} in terms of $\mB$ for a fixed \ac{pass} configuration and scheduled subset $\setS$. The marginal problem in this case is given by 
\begin{align}
    \min_{\mB}  \quad 
    & {
    \norm{\mGamma\mB}^2
    }/{
    \log_2\brc{\mathrm{SNR}\brc{\bl,\mGamma,\mB,\rho}
    }
    } 
    \tag{$\mathcal{M}_2$}
    \label{eq:M2}
  \quad \text{s.t.} \quad (\mathrm{C}_1). 
\end{align}

The problem in \eqref{eq:M2} describes generalized fractional programming, in which the nominator is convex and the denominator is difference of concave functions. While this problem is hard to solve directly, we can adopt the \emph{Dinkelbach’s method} to approximate its solution iteratively: starting with an initial $\mB^0$, we update $\mB^{t+1}$ from $\mB^t$ by solving 
\begin{align}
    \min_{\mB}  
    & \norm{\mGamma\mB}^2 - \eta^t{
    \log_2\brc{
    \mathrm{SNR}\brc{\bl,\mGamma,\mB,\rho}
    }
    } 
    \label{eq:M2-DC}
  \quad\text{s.t.} \quad (\mathrm{C}_1), 
\end{align}
where 
$
\eta^t = {\norm{\mGamma\mB^t}^2}/{
    \log_2\brc{\mathrm{SNR}\brc{\bl,\mGamma,\mB^t,\rho}
    }
    }
$. 
The problem in \eqref{eq:M2-DC}, describes a difference-of-convex optimization, in which the difference of two convex terms is minimized on a convex set. We hence follow classical minorization minimization \cite{ni22FL} to approximate its solution with the minimizer of the surrogate function that is determined by replacing the concave term, i.e. the second term in the difference, with its first-order truncated Taylor series. Due to the lack of space, we skip the details and refer the reader to \cite{ni22FL}. Using this surrogate minimization, the solution of \eqref{eq:M2-DC} is approximated as
\begin{align}\label{eq:BB2}
    b_k^{t+1} = \mathcal{P}_P\brc{\frac{
    \rho h_k(\bl)  \varphi_k/{
     \kappa^t
    }
    }{
    {\ln 2}/{\eta^t}- \abs{\rho h_k(\bl)}^2  \brc{{1}/{\tau^t}- {1}/{\kappa^t}}
    }},
\end{align}
for $k\in \setS$, where $\mathcal{P}_P \brc{x} = x$ for $\abs{x}^2 \leq P$ and $\mathcal{P}_P \brc{x} = {\sqrt{P}x}/{\abs{x}}$ if $\abs{x}^2 > P$, 
and $\kappa^t$ and $\tau^t$ are given by
\begin{subequations}\label{eq:Updt2}
\begin{align}
    \tau^t &=\norm{\rho\bh\brc{\bl}^\trp\mGamma \mB^t}^{2}
        +\sigma^{2} \rho^2, \\
        \kappa^t &= \norm{\rho \bh \brc{\bl}^\trp\mGamma\mB^t - \tilde{\boldsymbol{\varphi}}^\trp\mGamma}^{2}
        +\sigma^{2} \rho^2 .
\end{align}
\end{subequations}
The solution of \eqref{eq:M2} is then approximated by iteratively updating $b^{t}_k$ till convergence, as summarized in Algorithm~\ref{alg:2}.

\begin{algorithm}[t!]
\caption{Transmit Power Scaling}\label{alg:2}
\begin{algorithmic}[1]
\State Set $t=0$ and set $\varepsilon_0$ to a small constant
\While{$\varepsilon > \varepsilon_0$ and $t < T$}
\State Update $\eta^{t}$ and $\brc{\kappa^{t},\tau^{t}}$ 
\State \textbf{for} $k\in\setS$ \textbf{do} Update $b_k^{t+1}$ with \eqref{eq:BB2} \textbf{end for}
\State Update $t \gets t+1$ and $\varepsilon \gets \norm{\mB^{t+1} \mGamma - \mB^t \mGamma}$
\EndWhile
\end{algorithmic}
\end{algorithm}

\subsection{Device Scheduling}
The last marginal problem addresses the device selection task for a given power scaling and \ac{pass} tuning. This problem finds a similar form as \eqref{eq:M2}, with the key difference that the optimization variables are binary in this case. While one can extend Dinkelbach's method to this problem, we take a more practical approach based on step-wise regression. To this end, let us rewrite the marginal problem directly in terms of the selection subset $\setS$ as
\begin{align}
\min_{\setS}\quad 
& G \brc{\setS}    
       \tag{$\mathcal{M}_3$} \label{eq:M3}
\quad
\text{s.t.}\quad 
\abs{\setS} \geq K_{\min} ,
\end{align}
where the marginal objective $G$ is defined as
\begin{align}
G \brc{\setS} = 
\frac{\sum_{k\in \setS}  \abs{b_k}^2}{\log_2\brc{1+ \sum_{k\in \setS} C_k^0} - \log_2\brc{1+ \sum_{k\in \setS} C_k^1}
},
\end{align}
with $C_k^m$ being defined for $m\in\set{0,1}$ as
\begin{align}
C_k^m = 
    \norm{\bh(\bl)^\trp\mGamma\mB / \sigma - m \boldsymbol{\varphi}^\trp \boldsymbol{\Gamma} / \rho \sigma}^{2}.
\end{align}
A fast approximate solution to \eqref{eq:M3} is given by iterative select and reject: starting with $\setS^0 = [K]$, we first drop devices step-wise. In iteration $t$, we select the device $i^t \in \setS^t$ such that the objective drop $G(\setS^t ) - G(\setS^t - \set{i^t}) $ is maximized. The dropping stops, when either no further objective reduction is possible or $\abs{\setS^t} = K_{\min}$. In the former case, we improve our selection by exchanging $i \in \setS^t$ with $j\notin \setS^t$ to check for potential objective reduction, and repeat this until convergence. This step-wise approach is summarized in Algorithm~\ref{alg:3}. In this algorithm, we define $\mathrm{d} G_{i,j} = G(\setS ) - G([\setS - \set{i}] \cup \set{j})$.
\begin{algorithm}[t]
\caption{Device Scheduling}\label{alg:3}
\begin{algorithmic}[1]
\State Initiate $\setS = [K]$ and set \texttt{flag = True}
\While{$\abs{\setS} > K_{\min}$ and \texttt{flag = True} }
\State Set $i^\star \gets\argmax_{i \in \setS} G(\setS ) - G(\setS - \set{i}) $ 
\If{$G(\setS ) > G(\setS - \set{i^\star}) $}
\State Update $\setS \gets \setS - \set{i^\star}$
\Else
\State \textbf{for} {$i\in \setS$} \textbf{do} $j^\star \brc{i} = \argmax_{j \notin \setS} 
\mathrm{d} G_{i,j}
$ 
\textbf{end for}
\State Set $i^\star \gets\argmax_{i \in \setS} 
\mathrm{d} G_{i,j^\star(i)}
$
\If{$\mathrm{d} G_{i^\star,j^\star(i^\star)} > 0
$}
\State Update $\setS \gets [\setS - \set{i^\star}] \cup \set{j^\star(i^\star)}$
\State \hspace{-6.5mm} \textbf{else} Set \texttt{flag = False} 
\EndIf
\EndIf
\EndWhile
\end{algorithmic}
\end{algorithm}

\subsection{Two-Tier Algorithm and Its Complexity}
The solution to the joint optimization \eqref{eq:P} is approximated by an outer loop, which alternates among Algorithms \ref{alg:1}-\ref{alg:3} for a maximum number of iterations or until the solution converges. The complexity of the algorithm is thus scaled as $\mathcal{O}\brc{TC_{N,K}}$ with $T$ being the number of iterations in the outer loop and $C_{N,K}$ being the complexity order of the inner-loop algorithms. Assuming Algorithm~\ref{alg:1} to iterate for $T_1$ iterations, \ac{pass} tuning imposes a complexity of order $\mathcal{O}\brc{T_1 N}$. The complexity of Algorithm~\ref{alg:2} is also of order $\mathcal{O}\brc{T_2 K}$ for $T_2$ iterations, and the step-wise scheduling imposes the complexity of $\mathcal{O}\brc{K \log K}$. Our numerical experiments confirm that the inner loops and the outer loop converge within a constant number of iterations, imposing low computational complexity on the \ac{ps}.

\section{Numerical Experiments}
We validate our design through numerical experiments. We train a four-layer convolutional neural network on MNIST in a network with $K=8$ devices. The training data is distributed uniformly among the devices, and in each communication round at least $K_{\min} = 6$ devices are active. We use \ac{fedavg} with 2 local epochs at the devices. The devices are uniformly distributed within a square region with edge length $D = 50$ m. The \ac{pass} transceiver is located at height $a=5$ m and is pinched with $N=32$ elements. Unless stated otherwise, the following is considered: the noise power at the \ac{ps} is $ \log \sigma^2 = -90$ dBm, carrier frequency is $f_0 = 5$ GHz, \textit{transmit} power at devices is $\log P$ = 0 dBm, shadowing coefficients are set to $\alpha_k= 1$, and the distance between the pinching elements is set to be larger than $\Delta = \lambda/2$. As the baseline, we consider (i) the case with perfect links, i.e. vanilla \ac{fedavg}, and (ii) a setting in which the \ac{ps} is equipped with a \ac{mimo} transceiver with $M$ antennas at the center of the square. We consider line-of-sight channels between the devices and \ac{ps}. It is worth noting that hardware-wise, this setting is comparable with the \ac{pass} only if it has $M=1$ antennas, as the \ac{ps} requires $M$ separate radio-frequency chains.

\begin{figure}[t!]
  \centering
  \begin{tikzpicture}
    \begin{axis}[
        width=3in, height=1.8in,
        xlabel={round},
        ylabel={Accuracy},
        ymin=5, ymax=106,
        xmin=1, xmax=42,
        xtick={10, 20, 30, 40},
        ytick={20, 60, 100},
        legend style={
            at={(.9,0.5)},
            anchor=east
        },
    ]

\addplot[blue, thick, mark = triangle] table[row sep=\\, x index=0, y index=1] {
1  83.24 \\
2  89.24 \\
3  93.74 \\
4  90.60  \\
5  93.71 \\
6  94.29\\
7  93.91  \\
8  95.57  \\
9  94.45  \\
10  94.57 \\
11  96.48 \\
12  94.37  \\
13  96.14 \\
14  95.73 \\
15  96.08 \\
16  94.97\\
17  96.46 \\
18  94.17  \\
19  96.24 \\
20  95.84  \\
21  94.57  \\
22  95.86 \\
23  94.30  \\
24  94.39 \\
25  96.76 \\
26  96.47 \\
27  95.54 \\
28  96.39 \\
29  95.90 \\
30  96.56 \\
31  94.20 \\
32  94.81 \\
33  96.89 \\
34  96.94 \\
35  96.06 \\
36  96.55 \\
37  96.79 \\
38  96.62 \\
39  96.76 \\
40  96.79 \\
};
\addlegendentry{\footnotesize PASS}

\addplot[red, thick] table[row sep=\\, x index=0, y index=1] {
1 88.43 \\
2 89.68 \\
3 94.15 \\
4 94.58 \\
5 95.53 \\
6 96.52 \\
7 96.68 \\
8 96.43 \\
9 97.32 \\
10 97.23 \\
11 97.45 \\
12 97.54 \\
13 97.65 \\
14 97.52 \\
15 97.82 \\
16 97.97 \\
17 98.17 \\
18 98.12 \\
19 98.05 \\
20 98.14 \\
21 98.50 \\
22 98.43 \\
23 98.16 \\
24 98.50 \\
25 98.44 \\
26 98.71 \\
27 98.36 \\
28 98.63 \\
29 98.67 \\
30 98.52 \\
31 98.66 \\
32 98.70 \\
33 98.86 \\
34 98.77 \\
35 98.61 \\
36 98.85 \\
37 98.88 \\
38 98.88 \\
39 98.91 \\
40 98.45 \\
};
\addlegendentry{\footnotesize FedAvg}

\addplot[green!40!black, thick, mark=*] table[row sep=\\, x index=0, y index=1] {
1 11.26 \\
2 10.32 \\
3 10.10 \\
4 10.28 \\
5 10.10 \\
6 9.74 \\
7 10.28 \\
8 9.82 \\
9 8.92 \\
10 9.48 \\
11 9.74 \\
12 10.28 \\
13 10.09 \\
14 10.09 \\
15 10.09 \\
16 9.58 \\
17 10.28 \\
18 11.35 \\
19 9.58 \\
20 11.35 \\
21 11.35 \\
22 10.09 \\
23 10.28 \\
24 10.09 \\
25 10.10 \\
26 10.09 \\
27 11.35 \\
28 10.10 \\
29 10.09 \\
30 10.32 \\
31 11.35 \\
32 10.32 \\
33 9.58 \\
34 10.28 \\
35 10.32 \\
36 9.82 \\
37 10.27 \\
38 10.28 \\
39 9.74 \\
40 9.82 \\
};
\addlegendentry{\footnotesize MIMO - $M=8$}

\addplot[green!40!black, thick, mark=o] table[row sep=\\, x index=0, y index=1] {
1 8.92 \\
2 10.28 \\
3 8.92 \\
4 9.80 \\
5 8.92 \\
6 9.74 \\
7 11.35 \\
8 10.28 \\
9 11.35 \\
10 9.58 \\
11 10.28 \\
12 10.28 \\
13 11.35 \\
14 10.10 \\
15 10.09 \\
16 10.10 \\
17 10.10 \\
18 9.82 \\
19 9.58 \\
20 11.35 \\
21 9.82 \\
22 9.82 \\
23 10.10 \\
24 9.74 \\
25 9.82 \\
26 9.80 \\
27 9.82 \\
28 10.09 \\
29 10.28 \\
30 10.28 \\
31 11.35 \\
32 9.82 \\
33 10.08 \\
34 10.32 \\
35 10.32 \\
36 9.82 \\
37 9.80 \\
38 10.09 \\
39 11.35 \\
40 11.35 \\
};
\addlegendentry{\footnotesize MIMO - $M=32$} 
\end{axis}
\end{tikzpicture}
  \caption{Accuracy vs communication round at $D=50$ m and $\log P = 0$ dBm. The MIMO \ac{ps} is unable to learn due to low receive SNR.}
  \label{fig:1}
\end{figure}
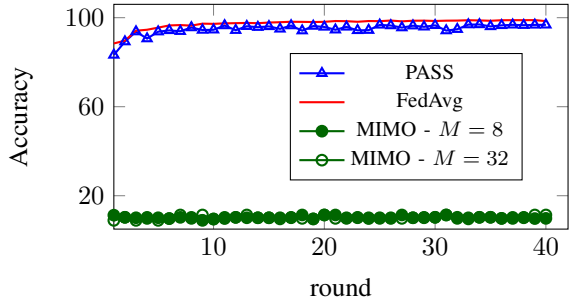
Fig.~\ref{fig:1} shows the results, comparing the \ac{pass} against the \ac{mimo} system for different choices of $M$. As the figure shows, while \ac{pass} closely tracks the learning curve for \ac{fedavg}, the \ac{mimo} \ac{ps} is not trained within the first 40 epochs even with $M=32$ transmit antennas. This behavior is intuitive, as the extension of the waveguide across the region enables the \ac{pass} to substantially compensate for the path-loss, and therefore improves the receive \ac{snr} drastically.

\begin{figure}[t!]
  \centering
  \begin{tikzpicture}
    \begin{axis}[
        width=3in, height=1.8in,
        xlabel={$\log D$},
        ylabel={Accuracy},
        ymin=5, ymax=89,
        xmin=9, xmax=500,
        xmode=log,
        legend pos=south west,
    ]

\addplot[blue, thick, mark = triangle] table[row sep=\\, x index=0, y index=1] {
10  84.54\\
50  84.33\\
100  83.91\\
200  83.71\\
400  83.38\\
};
\addlegendentry{\footnotesize PASS}

\addplot[green!40!black, thick, mark=square] table[row sep=\\, x index=0, y index=1] {
10    80.63\\
50   51.46\\
100    43.53\\
200    32.50\\
400    22.46\\
};
\addlegendentry{\footnotesize $M=8$}

\addplot[green!40!black, thick, mark=*] table[row sep=\\, x index=0, y index=1] {
10      83.40 \\
50     70.43 \\
100   52.52 \\
200     40.10 \\
400     28.11 \\
};
\addlegendentry{\footnotesize $M=16$}

\addplot[green!40!black, thick, mark=o] table[row sep=\\, x index=0, y index=1] {
10   83.45\\
50  75.67\\
100  58.31\\
200    42.52\\
400   23.07\\
};
\addlegendentry{\footnotesize $M=32$}
    
    \end{axis}
    \end{tikzpicture}
  \caption{Accuracy vs area size. The results are reported after training a two-layer MLP for 20 rounds at $\log P = 0$ dBm.}
  \label{fig:2}
\end{figure}
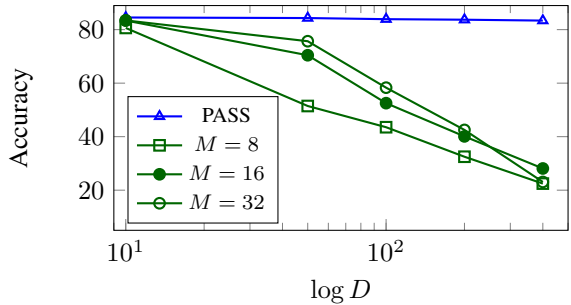
To elaborate the order of path-loss compensation achieved by the \ac{pass}, we next plot the final accuracy after 20 rounds against the region edge size $D$ in Fig.~\ref{fig:2}. For better demonstration, we consider MNIST classification with a basic two-layer multilayer perceptron (MLP), as the training performance in this case scales smoothly with the aggregation error due to the limited model capacity. As the result depicts, the \ac{pass} offers an almost constant training performance across $D$, while the accuracy of the \ac{mimo} system drops considerably as $D$ increases. Fig.~\ref{fig:3} shows the accuracy against the transmit power for the same setting as in Fig.~\ref{fig:2} at $D=50$ m. The results show a significant gain in terms of transmit power, demonstrating the significant path-loss compensation offered by the \ac{pass}.

\begin{figure}[t!]
  \centering
  \begin{tikzpicture}
    \begin{axis}[
        width=3in, height=1.8in,
        xlabel={$\log P$ [dBm]},
        ylabel={Accuracy},
        ymin=5, ymax=89,
        xmin=-25, xmax=65,
        xtick={-20, -10, 0, 10, 20, 30, 40, 50, 60},
        legend pos=south east,
    ]

\addplot[blue, thick, mark = triangle] table[row sep=\\, x index=0, y index=1] {
-20 79.51 \\
-10 84.51 \\
0 84.63 \\
10 84.83 \\
20 84.89 \\
30 84.72 \\
40 84.56 \\
50 84.17 \\
60 84.71 \\
};
\addlegendentry{\footnotesize PASS}

\addplot[green!40!black, thick, mark=square] table[row sep=\\, x index=0, y index=1] {
-20 12.06 \\
-10 29.48 \\
0 57.54 \\
10 72.70 \\
20 82.11 \\
30 83.04 \\
40 82.83 \\
50 83.03 \\
60 82.95 \\
};
\addlegendentry{\footnotesize $M=8$}

\addplot[green!40!black, thick, mark=*] table[row sep=\\, x index=0, y index=1] {
-20 21.20 \\
-10 39.44 \\
0 70.08 \\
10 82.77 \\
20 82.94 \\
30 83.15 \\
40 83.14 \\
50 83.21 \\
60 83.14 \\
};
\addlegendentry{\footnotesize $M=16$}

\addplot[green!40!black, thick, mark=o] table[row sep=\\, x index=0, y index=1] {
-20 25.11 \\
-10 44.97 \\
0 67.14 \\
10 81.88 \\
20 82.96 \\
30 83.09 \\
40 82.80 \\
50 83.00 \\
60 83.25 \\
};
\addlegendentry{\footnotesize $M=32$}
    
    \end{axis}
    \end{tikzpicture}
   \caption{Accuracy vs area size. The results are reported after training a two-layer MLP for 20 rounds at $D = 50$ m.}
  \label{fig:3}
\end{figure}
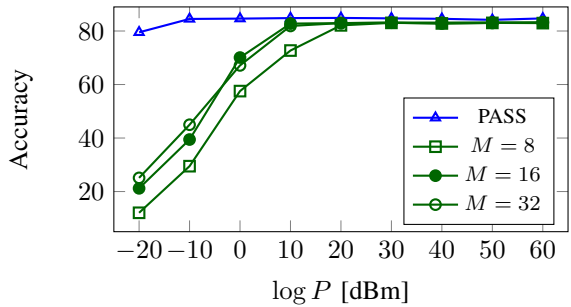

\section{Conclusions}
This work studied \ac{ota-fl} via a \ac{pass}-assisted \ac{ps}. We developed an algorithmic approach, which jointly schedules the devices and tunes the \ac{pass} at the \ac{ps}. The results demonstrate that using the \ac{pass}, the required transmit power for efficient aggregation is significantly reduced. This gain is mainly due to effective path-loss compensation realized by the flexibility of the \ac{pass}. The findings of this study suggest that \ac{pass}-assisted transmission can boost significantly the energy efficiency of distributed learning in wireless settings with \ac{los} links between devices and the \ac{ps}, e.g. indoor systems. Extending the results of this study to \ac{mimo}-\ac{pass} transceiver is a natural direction for future work. The work in this direction is currently ongoing.

\bibliographystyle{IEEEtran}
\bibliography{ref.bib}
\end{document}